\newtheorem{thm}{THEOREM}
\newtheorem*{thm*}{THEOREM}
\newtheorem{cor}{Corollary}
\newtheorem{lem}{Lemma}
\newtheorem{prop}{Proposition}
\theoremstyle{definition}
\theoremstyle{remark}
\newtheorem{rem}{Remark}
\newtheorem{eg}{Example}
\newcommand{\bx}{\mathbf{x}}
\newcommand{\beq}{\begin{equation}}
\newcommand{\eeq}{\end{equation}}
\newcommand{\bseq}{\begin{subequation}}
\newcommand{\eseq}{\end{subequation}}
\newcommand{\p}{\partial}
\newcommand{\la}{\lambda}
\newcommand{\si}{\sigma}
\newcommand{\bna}{\begin{eqnarray}}
\newcommand{\ena}{\end{eqnarray}}
\newcommand{\bea}{\begin{eqnarray*}}
\newcommand{\eea}{\end{eqnarray*}}
\newcommand{\ben}{\begin{enumerate}}
\newcommand{\een}{\end{enumerate}}
\newcommand{\bi}{\begin{itemize}}
\newcommand{\ei}{\end{itemize}}
\begin{document}

\title{Remarks on combinatorial aspects of the KP Equation}


\author{Shabnam Beheshti}
\address{Department of Mathematics, Rutgers, The State University of New Jersey, Piscataway, NJ 08854}
\email{beheshti@math.rutgers.edu}

\author{Amanda Redlich}
\address{Department of Mathematics, Bowdoin College, Brunswick, ME, 04011}
\email{aredlich@bowdoin.edu}

\begin{abstract}
We survey several results connecting combinatorics and Wronskian solutions of the KP equation, contextualizing the successes of a recent approach introduced by Kodama, et.~al.  We include the necessary combinatorial and analytical background to present a formula for generalized KP solitons, compute several explicit examples, and indicate how such a perspective could be used to extend previous research relating line-soliton solutions of the KP equation with Grassmannians.

\vspace{2mm}

\noindent Keywords: Integrable PDE, Combinatorics, Kadomtsev-Petviashvili, Hirota Derivative, Wronskian, Faa di Bruno, Vandermonde matrix
\end{abstract}

\maketitle



\section{Introduction}
Integrable nonlinear partial differential equations (PDEs) have been a source of investigation for a vast array of physical systems:  shallow-water waves modeled by the Korteweg-de Vries (KdV), Kadomtsev-Petviashvili (KP) and Boussinesq Equations \cite{GGKM, KP, Bouss}, condensed matter theory and optics studied by way of the sine-Gordon and cubic nonlinear Schr\"odinger (NLS) equations \cite{SG, AKNS-2, ZS-I}, gravitation and electrodynamics described by Yang-Mills, Einstein vacuum and Einstein-Maxwell equations \cite{ZB-I, Ale80, GurJanXan}.  In general, nonlinear PDEs (or systems of PDEs) are extremely difficult to solve explicitly; however, a variety of techniques exist to analyze those which are integrable.  One key feature of an integrable PDE is the existence of a Lax pair, or Lax system, i.e., an overdetermined linear system for which the equation of interest arises as a compatibility condition \cite{Lax68}.  In many instances, this associated system gives rise to algorithmic solution-generating mechanisms (e.g., B\"acklund transforms, Inverse Scattering Mechanism).

First used to model ion waves in plasmas under weak transverse perturbations, the Kadomtsev-Petviashvili Equation, 
\beq\label{eq:KPu}
(-4u_t +6uu_x+u_{xxx})_x + 3\si u_{yy} =0,
\eeq
is considered a prototypical integrable evolution equation in $(2+1)$-dimensions \cite{KP}.  Here, $u=u(x,y, t)$ and subscripts denote partial derivatives with respect to the variable indicated 
and $\si = \pm 1$.  For the subsequent discussion, we shall consider only the case $\si=+1$, known as the stable or KP-II Equation.  A natural generalization of the well-studied integrable KdV Equation, modeling shallow water waves, Equation \eqref{eq:KPu} has been used to investigate the dynamics of Bose-Einstein Condensates (BECs) \cite{Tsuchiya} and various gravitational models (Ward's Chiral model, 2+1 dilaton theories) \cite{Dimakis}, among other physical models of significance.  

More recently, connections between the KP hierarchy and combinatorics have been investigated in several contexts: Young tableaux have been used to analyze the behavior of solutions to the KdV equation \cite{Deift}, chord diagrams allow for classification of line-soliton solutions to the stable KP Equation \cite{CK-1, CK-3}, Topelitz determinants from random matrix theory have played a role in the analysis of certain commuting differential operators \cite{ItsTracyWidom, ItsWidomTracy}.  This novel perspective has led to significant progress in understanding classically integrable PDE; on the combinatorics side, the cross-fertilization has created multiple new areas of research.

The main purpose of this communication is to survey a recent successful approach for analyzing various soliton solutions of the KP equation using combinatorics appearing in the body of work \cite{CK-1, CK-3, KodamaPierce, KW-1, KW-2, CK-2, KW-3}; the general method is to generate a family of soliton solutions to the KP equation with a certain combinatorial structure, and then analyze the KP solutions produced in light of this underlying structure.  This gives a bijection between combinatorial objects (e.g. chord diagrams, plabic graphs, Grassmann necklaces) and features of the associated soliton solutions in large and small scales (e.g. contour plots, resonant graphs).  The bijection informs and is influenced by the family of solutions studied.

It is our intention to introduce the basic vocabulary and quantities necessary to understanding the nature of these problems to newcomers in both nonlinear analysis and enumerative combinatorics.  We also indicate a perspective by which several of the more technical formulas, better known in the integrable PDEs literature, may be used to extend current results relating line-solitons to Grassmannians.  Consequently, the rest of the paper is organized as follows.  First, we discuss integrability of the KP-II equation in terms of Hirota derivatives and determinantal solutions, identifying combinatorial features throughout the discussion.  Next, a two part discussion is devoted to establishing a combinatorial framework for generalized Wronskian solutions of the KP Equation.  We then indicate how to interpret the formulas in the setting of Grassmannians and calculate several explicit examples.  Finally, in light of these facts, we suggest a general program of investigation for future work.

\section{Integrability of the KP Equation}
Various aspects of \eqref{eq:KPu} have been studied using techniques from nonlinear analysis, algebraic geometry, and integrable systems, among other fields.  It admits a Riemann-Hilbert formulation \cite{AblowitzClarkson}, as well as successful use of the Inverse Scattering Mechanism (ISM) \cite{FokasAblowitz} and the Hirota method \cite{Hirota, Satsuma}; local and global well-posedness of the Cauchy Problem are established \cite{Bourgain, Tzvetkov}; KP Hierarchies may be interpreted in the context of algebraic geometry as flows on moduli spaces \cite{Krichever, BenZvi}.

Since we are following the main lines of enquiry in \cite{CK-1, CK-2, CK-3, KW-1, KW-2}, the remainder of this paper concerns Hirota integrability of the KP-II Equation, and in particular, combinatorics of certain Wronskian solutions.  We will restrict our attention to this family of solutions to \eqref{eq:KPu} and refer the reader to \cite{Hirota} for a careful discussion of other determinantal methods used in the study of bilinearizable PDE (e.g.,  Grammians, Pfaffians, Casoratians, etc.).

Recall the Hirota derivatives to be binary differential operators acting on smooth functions $a(x), b(x)$ of one real variable, defined by
\begin{eqnarray*}
D_x^n(a,b) &=& \left( \frac{\partial}{\partial x} - \frac{\partial}{\partial w}   \right)^n a(x) b(w) \Big{|}_{w=x} = \frac{\partial^n }{\partial w^n}a(x+w)b(x-w) \Big{|}_{w=0} \\
D_t^mD_x^n(a,b) &=& \frac{\partial^m }{\partial s^m} \frac{\partial^n }{\partial w^n} a(t+s,x+w)b(t-s,x-w)\Big{|}_{x=0,w=0},
\end{eqnarray*}
where $m,n = 0, 1, 2, \ldots$.  An equation is said to be bilinearizable if it can be recast, under an appropriate transformation, as an equation in terms of Hirota derivatives.  Examples of bilinearizable PDEs include the KdV, KP, Sawada-Kotera, Boussinesq, and Toda lattice equations, to name a few.
\begin{eg}
Under the logarithmic tranformation 
\beq\label{eq:logtransf}
u(x,y,t) = 2 (\log \tau(x,y,t))_{xx},
\eeq
the original KP equation \eqref{eq:KPu} is transformed to the following Hirota or bilinear form ($\si=1$):
\begin{eqnarray}\label{eq:KPtau}
0 &=& \left[ D_x ^4-4D_xD_t+3D_y^2 \right] ( \tau, \tau) \\
&=& (\tau\tau_{xxxx}-4\tau_x\tau_{xxx}+3\tau_{xx}^2)-4(\tau\tau_{xt}-\tau_x\tau_t)+3(\tau\tau_{yy}-\tau_y^2).\nonumber
\end{eqnarray}
\end{eg}
\begin{rem}\label{rem:fourier}
Motivated by \eqref{eq:logtransf}, observe that if  $\tau_x = u \tau$, $\tau_y = \tau_{xx}$, $\tau_t = \tau_{xxx}$, then $u = 2(\log \tau)_{xx}$ is a solution to \eqref{eq:KPu}.  Since $\tau$ satisfies linear equations in $y$ and $t$, a general solution in terms of a Fourier transform can be found to be
$$
\tau(x,y,t) = \int_\mathcal{C} e^{kx+k^2y+k^3t} d\mu(k).
$$
Here $d\mu(k)$ is an appropriate measure on the complex curve of integration $\mathcal{C}\in \mathbb{C}$.  On the other hand, a finite-dimensional solution may be considered by using a (finite) Fourier transform having point measure ${\displaystyle d\mu(k) = \sum_{j=1}^N \rho_j \delta(k-k_j)dk}$, $\rho_j, \,k_j \in \mathbb{R}$.  In this case, $\tau$ is given instead by the sum
$$
\tau(x,y,t)= \sum_{j=1}^N \rho_j E_j(x,y,t),
$$
with $E_j(x,y,t)= e^{k_jx + k_j^2 y+k_j^3 t}$.  It is this formulation from which the authors of \cite{CK-1, KW-1, KW-2} are able to extract interesting combinatorial structure and which we will pursue further.
\end{rem}

\subsection{Wronskian integrability}
Those equations which can be written in Hirota form often admit determinantal solutions.  A PDE (or system of PDEs) is said to be Wronskian-integrable if it can be cast as a solvable system of equations in the entries of a particular Wronskian matrix; see ~\cite{Hirota, Satsuma} and references therein.   In \cite{FN-1, FN-2, FN-3, MaComplexitons}, the authors successfully generate solutions of the KdV, KP and Boussinesq Equations using the Wronskian 
\beq\label{eq:tauWronsk}
\tau= Wr(\phi_1, \phi_2, \ldots , \phi_n) = \left| \begin{array}{ccc} \phi_1 & \cdots & \phi_n \\ {\phi_1}^\prime & \cdots & {\phi_n}^\prime \\ {\phi_1}^{\prime\prime} & \cdots & {\phi_n}^{\prime\prime} \\ \vdots &  & \vdots \\ {\phi_1}^{(n-1)} & \cdots & {\phi_n}^{(n-1)} \end{array} \right| ,
\eeq
where $ \phi = \phi(x_1, x_2, x_3) = \phi(x,y,t)$ and ${\displaystyle {\phi_j}^{(i)}=\frac{\partial^i \phi_j}{\partial x^i} }$, for $i=1,2,3$ and $j=1, 2, \ldots n$.  In this manner, a transformed PDE in $\tau$ can be re-expressed as a system of PDEs in the generators\footnote{Note that this terminology is not to be confused with the generating functions arising in combinatorics.} $\phi_j$,  $j=1, 2, \ldots , n$.  When the PDE under consideration involves $N$ variables, it is possible to view the $\phi_j$ in \eqref{eq:tauWronsk} as functions of $\bx=(x_1 , \ldots , x_N)$.  In fact, such generalized $\phi_j$ may be used to solve PDEs with fewer independent variables by imposing additional constraints.

\begin{rem}\label{rem:schur}
It turns out that in the case of the KP Equation, the algebraic relation in the derivatives of $\tau$ appearing in \eqref{eq:KPtau} can be realized, in fact, as a Pl\"ucker relation on a Grassmannian.  Let $0 \leq  I_0 < I_1 < \cdots < I_{n-1}$ and define
$$
[I_0, I_1, \ldots , I_{n-1}] =  \left| \begin{array}{ccc} \phi_1^{(I_0)} & \cdots & \phi_n^{(I_0)} \\  \vdots &  & \vdots \\ {\phi_1}^{(I_{n-1})} & \cdots & {\phi_n}^{(I_{n-1})} \end{array} \right| .
$$
Then one may view $[I_0 , I_1, \ldots , I_{n-1}]$ as the algebraic symbol representing the determinant of $n$ vectors in $\mathbb{R}^{\infty}$ and the well-known Pl\"ucker Relation is given by
$$
\sum_{j=0}^n (-1)^j [\alpha_0 , \ldots ,  \alpha_{j-2}, \beta_j] [ \beta_0 , \ldots , \hat{\beta}_j , \ldots , \beta_n].
$$
In this notation, $\tau = Wr(\phi_1, \phi_2, \ldots , \phi_n)=[0,1, \ldots, n-1]$, where the numbers $0,1, \ldots, n-1$ represent derivatives of the functions $\phi_j$ in the determinant appearing in \eqref{eq:tauWronsk}, and the associated Pl\"ucker Relation for $\alpha = \{0,1, \ldots , n-2\}$ and $\beta=\{0,1, \ldots , n-3 , n-1 , n , n+1\}$ recovers the KP Equation.  One proof of this result connects the symbol $[I_0, I_1, \ldots , I_{n-1}]$ with the derivatives of the $\tau$ function using Schur polynomials and re-expresses  \eqref{eq:KPtau} in terms of Young diagrams \cite{Hirota, MJD}.   More generally, the Sato theory establishes that all of the equations of the KP Hierarchy, when viewed in terms of appropriate $\tau$ functions, can be realized as Pl\"ucker coordinates of the Grassmannian $Gr(n, \infty)$. 
Explicit treatment of these statements is found in \cite{Sato}, for instance.
\end{rem}

Let us return to \eqref{eq:tauWronsk} in the KP setting and assume $\phi_j(\bx)= \phi_{j}(x_1 , \ldots , x_N)$.  Then the additional constraint imposed on the $\phi_{j}(x_1 , \ldots , x_N)$ is that they satisfy a linear system
\begin{equation}\label{eq:hir}
\frac{\partial \phi_j}{\partial x_m}=\frac{\partial^{m}\phi_j}{\partial x_1^m}, \qquad m= 1 , \ldots , N.
\end{equation}
In light of Remark \ref{rem:fourier}, we make the following immediate observation: any function which may be described as the result of taking the derivative of $e^{px_1+p^2x_2+\cdots +p^N x_N}$ with respect to $p$ arbitrarily many times will also solve the KP equation in a Wronskian formulation.  This statement, also appearing as a remark in \cite{Hirota}, is stated formally below, and will be used as a starting point for our subsequent generalizations.

\begin{lem}\label{lem:explem}  Let $\phi (p,\bx) = e^{f(p,\bx)}$, where $f(p,\bx) = px_1 +p^2x_2 + p^3x_3 + \cdots +p^N x_N$.  Then for each $p$, $\phi(p,\bx)$ satisfies \eqref{eq:hir}. 
Furthermore, any collection of ${\displaystyle \frac{\partial^i \phi_j}{\partial p^i}}$ generate solutions to \eqref{eq:KPtau}, $i \in \mathbb{N}$.
\end{lem}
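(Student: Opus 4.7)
The plan is to dispatch the first claim by a one-line chain-rule computation, promote the result to all $p$-derivatives by commutativity, and then cite the standard Wronskian integrability theorem for the KP bilinear equation.

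For the first claim, I would simply compute both sides of \eqref{eq:hir} for $\phi(p,\bx)=\exp\bigl(\sum_{m=1}^N p^m x_m\bigr)$. Because $f(p,\bx)$ is linear in each $x_m$ with $\partial_{x_m} f = p^m$, the chain rule gives $\partial_{x_m}\phi = p^m \phi$. Iterating $\partial_{x_1}\phi = p\,\phi$ yields $\partial_{x_1}^m \phi = p^m \phi$, so the two expressions agree pointwise in $(p,\bx)$, establishing \eqref{eq:hir}.

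For the second claim, observe that $\partial_p$ commutes with each of $\partial_{x_1},\ldots,\partial_{x_N}$ on the smooth function $\phi$. Differentiating the identity $\partial_{x_m}\phi = \partial_{x_1}^m \phi$ in $p$ any number of times therefore gives
\[
\partial_{x_m}\!\left(\frac{\partial^i \phi}{\partial p^i}\right) \;=\; \partial_{x_1}^m\!\left(\frac{\partial^i \phi}{\partial p^i}\right), \qquad m=1,\ldots,N,
\]
so every $p$-derivative $\partial_p^i \phi$ is again a generator satisfying \eqref{eq:hir}. Evaluating at distinct parameters $p=p_j$ (as in Remark \ref{rem:fourier}) then produces as many linearly independent such generators as desired.

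To conclude that $\tau = Wr(\phi_1,\ldots,\phi_n)$ built out of any collection of these $\partial_p^{i} \phi$'s solves \eqref{eq:KPtau}, I would appeal to the Freeman--Nimmo Wronskian integrability theorem for KP recorded in \cite{FN-1, FN-2, FN-3}: if each column generator satisfies the linear hierarchy \eqref{eq:hir}, then the Wronskian \eqref{eq:tauWronsk} automatically solves the Hirota bilinear form \eqref{eq:KPtau}. This reduces the bilinear identity to a Pl\"ucker/Jacobi determinant identity, exactly as sketched in Remark \ref{rem:schur}. The substantive step is thus this appeal to Wronskian integrability; the two claims of the lemma themselves are routine verifications, and the only genuine care required is to keep track of the difference between differentiation in the spectral parameter $p$ and in the independent variables $x_m$, ensuring the commutativity argument applies before specializing to distinct $p_j$'s.
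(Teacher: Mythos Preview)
Your proposal is correct and follows essentially the same route as the paper: a direct verification that $\partial_{x_m}\phi=p^m\phi=\partial_{x_1}^m\phi$, followed by commutativity of $\partial_p$ with each $\partial_{x_m}$ to propagate \eqref{eq:hir} to all $p$-derivatives, and finally an appeal to the Freeman--Nimmo Wronskian result \cite{FN-1,FN-2,FN-3} for the passage to \eqref{eq:KPtau}. The paper's own argument is the same in outline (it phrases the first step as ``a straightforward induction on $m$'' and the second as commutativity of mixed partials), so there is no substantive difference in approach.
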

Since the equations in \eqref{eq:hir} are linear, a straightforward induction on $m$ proves the first statement of the lemma.  Furthermore, since derivatives with respect to $p$ and each $x_m$ commute, it is clear that any collection of the derivatives ${\displaystyle \frac{\partial^i \phi_j}{\partial p^i}}$ will satisfy \eqref{eq:hir} and hence generate solutions to \eqref{eq:KPtau}.  

As a consequence of Lemma \ref{lem:explem}, $p$-derivatives of the $\phi_j$ provide a family of generators for the Wronskian formulation in \eqref{eq:tauWronsk} of solutions $\tau$ to the KP Equation.  Thus to generalize the treatment in \cite{KW-1, KW-2}, it will be useful to first express closed formulas for these expressions. To accomplish this, and provide a library of examples for the remainder of the paper, we calculate the first few $p$-derivatives of $\phi = e^f$.  We shall see that even the simple cases appearing in Table \ref{tab:expder} are suggestive of underlying combinatorial structure in the solutions to \eqref{eq:KPtau}.

%

\begin{center}
\begin{table}[h]
\begin{tabular}{l | l | l | l }
\hline
   $i$
  & $\frac{\p ^i \phi}{\p p^i}$ & partitions of $i$ & derivative coefficients\\
\hline
1 &  $e^f \left( f^\prime\right)$  & $\{1\}$ & $(1)$ \\
2 &  $e^f \left(f^{\prime\prime} + (f^\prime)^2 \right)$ & $\{2\}, \{1, 1\}$ & $(1, 1)$ \\
3 & $e^f \left( f^{\prime\prime\prime} + 3f^\prime f^{\prime\prime} + (f^\prime)^3 \right)$ & $\{3\}, \{1, 2\}, \{1, 1, 1\}$ & $(1, 3, 1)$ \\
4 & $e^f \left( f^{(iv)} + 3(f^{\prime\prime})^2 + 4f^\prime f^{\prime\prime\prime} \right.$ & $\{4\}, \{2, 2\}, \{1, 3\},$ &  $(1, 3, 4, 6, 1)$ \\
 & $\quad \left. + 6(f^\prime)^2 f^{\prime\prime} + (f^\prime)^4 \right)$ & $\quad  \{1, 1, 2\}, \{1, 1, 1, 1\}$  & \\
$\cdots$ & $\cdots$ & $\cdots$ & $\cdots$ \\
\hline
\end{tabular}
\caption{Derivatives of $\phi(p,\bx)=e^{f(p,\bx)}$ with respect to $p$}
\label{tab:expder}
\end{table}
\end{center}

\subsection{Combinatorial formulation of $p$-derivatives}

In this section, our task is to understand the structure of the $p$-derivatives appearing in Table \ref{tab:expder}.  Rather than being concerned with analytical properties of the $\phi(p, \bx)$ appearing in \eqref{eq:tauWronsk}, we instead search for explicit formulas or descriptions of the relevant functions from a combinatorial perspective.  This viewpoint allows us to see patterns and extract information hidden by the differential formulae.  We recall some combinatorial concepts in order to do so.

Each term in the expansion of $\frac {\p ^i \phi}{\p p^i}$ corresponds to a partition, i.e., a way of writing a natural number as a sum.  In general, we will use $\lambda$ to denote a partition, $\ell$ to denote its sum, and $\{a_1, \ldots a_k\}$ to be the terms in the partition.  The symbol $\vdash$ indicates "is a partition of."  By convention, we always arrange the parts of the partition in weakly increasing order, i.e., $a_i \leq a_{i+1}$.  To denote the number of the smallest, second smallest, third smallest, etc., term in a given partition, we use $\beta_1 , \beta_2 , \ldots , \beta_k$.  Note that the number of $\beta$'s and the number of $a$'s are the same, namely $k$, and that we allow for $\beta_j=0$ if necessary.  Further background may be found in \cite{Stanley}.
\begin{eg}
$\{1,1,2\} \vdash 4$, since 1+1+2=4.  In this case, we say that $\{1,1,2\}$ is a partition of 4 which has length 3, with $a_1=1,a_2=1,$ and $a_3=2$ (as repeated numbers are allowed).  Consequently, $\beta_1 = 2, \beta_2 = 1,$ and $\beta_3=0$.

%
With this notation in mind, we return to the $p$-derivatives of $\phi$ listed in Table \ref{tab:expder} and examine $\frac{\partial^4 \phi}{\partial p^4}$.  The exponents and indices of differentiation for each term in the sum correspond to a partition of 4: $f^{(iv)}$ gives the partition $\{4\}$, $(f'')^2$ gives the partition $\{2,2\}$, $f'f'''$ gives the partition $\{1,3\}$, $(f')^2f''$ gives the partition $\{1,1,2\}$, and $(f')^4$ gives $\{1,1,1,1\}$.  In general, the summands of the $k^{th}$ derivative correspond to partitions of $k$.  
\end{eg} 
\begin{eg}
Let $\la=\{2,2,2,2,4,5,5 \}$ so that $\ell = 22$ and $k=7$.  For this partition, we have
$a_1=a_2=a_3=a_4=2$,  $a_5=4$, $a_6=a_7=5$, $\beta_1=4$, $\beta_2=1$, $\beta_3 = 2$ and $\beta_4=\beta_5=\beta_6=\beta_7=0$.  This expression is one of the terms of the partition, indicated in row 22 of Table \ref{tab:expder}, corresponding to the product $(f^{\prime\prime})^4(f^{(iv)})(f^{(v)})$.  
%
\end{eg}

Guided by the previous example, we let $f^{\lambda}$ denote the product of derivatives of $f$ given by $\lambda$.  For example,  if $\lambda=\{1,1,2\}$, then $f^{\lambda}=f'f'f''=(f')^2f''$.  The pattern appearing in the powers of derivatives as well as their coefficients is, in fact, a special case of a classical theorem giving the derivative in terms of combinatorial partitions.

\begin{thm}[Fa\`a di Bruno's Formula \cite{FaadiBruno}]\label{thm:FdB}
If $g$ and $f$ are $\ell^{th}$-order differentiable functions, then
$$\frac{d^\ell}{dt^\ell}[g(f(t))]=\sum_{\la \vdash \ell}\binom{\ell}{a_1, \ldots , a_k}\frac{1}{\beta_1!\cdots \beta_k!}g^{(k)}(f(t))\cdot f^{\la}(t),$$
where ${\displaystyle \binom{\ell}{a_1, \ldots , a_k} = \frac{\ell !}{a_1 ! \cdots a_k !}}$ is a multinomial coefficient.
\end{thm}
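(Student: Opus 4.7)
The plan is to prove Fa\`a di Bruno's formula via a two-step combinatorial argument: first establish a finer ``set-partition'' refinement of the identity, and then group set partitions according to their underlying integer-partition shape. This mirrors the combinatorial philosophy already advertised in the preceding sections.

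First, I would prove by induction on $\ell$ the preliminary identity
$$\frac{d^\ell}{dt^\ell} g(f(t)) = \sum_{\pi \in \Pi_\ell} g^{(|\pi|)}(f(t)) \prod_{B \in \pi} f^{(|B|)}(t),$$
where $\Pi_\ell$ denotes the set of all set partitions of $[\ell] := \{1, \ldots, \ell\}$ and $|\pi|$ is the number of blocks of $\pi$. The base case $\ell = 1$ is the ordinary chain rule. For the inductive step, I would apply $d/dt$ term by term: differentiating the outer $g^{(|\pi|)}(f)$ contributes $g^{(|\pi|+1)}(f) \cdot f'$, corresponding to appending $\{\ell+1\}$ as a new singleton block of $\pi$, while differentiating any interior factor $f^{(|B|)}$ raises its order by one, corresponding to inserting $\ell+1$ into the existing block $B$. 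Every set partition of $[\ell+1]$ is obtained from exactly one $\pi \in \Pi_\ell$ by one of these two operations, so the induction closes.

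Next, I would group the set partitions in $\Pi_\ell$ by the multiset of their block sizes, viewed as an integer partition $\lambda = \{a_1, \ldots, a_k\} \vdash \ell$. Any two set partitions whose multisets of block sizes agree contribute the same analytic term $g^{(k)}(f(t)) \cdot f^{\lambda}(t)$, since $k = |\pi|$ equals the length of $\lambda$. A standard multiset permutation count, arranging $[\ell]$ in $\ell!$ ways and then quotienting by the internal permutations within each block and by the permutations among equal-sized blocks, yields
$$\#\bigl\{\pi \in \Pi_\ell : \text{block sizes of } \pi = \lambda \bigr\} = \frac{\ell!}{a_1! \cdots a_k! \, \beta_1! \cdots \beta_k!}.$$
Substituting this coefficient into the set-partition sum recovers precisely the stated formula.

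The main obstacle is not analytic but bookkeeping. Two points require care: verifying the bijection in the inductive step (that the two extension rules applied to $\pi \in \Pi_\ell$ partition $\Pi_{\ell+1}$ without overlap), and reconciling the paper's convention of padding $\beta_j$ with zeros beyond the number of distinct parts of $\lambda$ with the classical multinomial coefficient. The latter is harmless because $0! = 1$, so the padded factors leave the count unchanged. Apart from the chain rule, induction, and the standard multiset permutation count, no further machinery is needed.
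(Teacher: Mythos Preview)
Your proof is correct: the set-partition refinement followed by grouping according to block-size multisets is the standard combinatorial route to Fa\`a di Bruno, and both the inductive bijection and the multiset-permutation count are stated accurately. Note, however, that the paper does not supply its own proof of this theorem---it is quoted as a classical result with a citation to \cite{FaadiBruno} and then applied to compute the $p$-derivatives in Corollary~\ref{cor:bigone}---so there is nothing in the paper against which to compare your argument.
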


This old and straightforward formula nevertheless leads to a broad array of results.  Modern reformulations of Fa\'a di Bruno appear in \cite{KodamaPierce, CK-1, CK-2}, for instance, where relationships between derivatives and partitions are exploited in order to relate Catalan and generalized Catalan numbers to the solutions of the KP hierarchy, by means of chord diagrams, ribbon graphs and other structures.  

We shall use this formula to calculate the arbitrary $p$-derivatives of $\phi(p,\bx) = e^{f(p,\bx)}$ appearing in Lemma \ref{lem:explem}.   Setting $t=p$, $g(t)=e^t$ and $f(t)=tx_1+\cdots +t^Nx_N$, it remains to compute $f^{\la}(t)$ for arbitrary $\la$.  Fortunately, that is fairly simple.  For fixed $\bx$, a direct calculation shows that for $a = 0 , 1, 2,  \ldots$ ,
$$
\frac{\p^a f}{\p p^a}=\sum_{j=1}^{N}(j)_{a}p^{j-a}x_{j},
$$
where $(j)_{a}=j(j-1)(j-2)\ldots(j-a+1)$ is the falling factorial\footnote{Note that $(j)_{a}=0$ for $a>j$ and $(j)_0=1$ for all $j$.}.  Therefore
$$
f^{\la}= \prod_{i=1}^{k}\frac{d^{a_i} f }{dp^{a_i} }= \prod_{i=1}^{k}\left(\sum_{j=1}^{N}(j)_{a_{i}}p^{j-a_{i}}x_{j}\right).
$$
Combining these observations with Fa\`a di Bruno's Formula gives us a version of Theorem \ref{thm:FdB} which provides the input data for our Wronskian solutions \eqref{eq:tauWronsk} in convenient form.  This formulation will allow us to consider a  broader class of functions than those appearing \cite{KW-1, KW-2}, and indicate an expansion to their approach in subsequent sections.
\begin{cor}\label{cor:bigone}
The $\ell^{th}$ derivative of $\phi(p,\bx)=e^{px_{1}+\cdots+p^{N}x_{N}}$ is
\beq\label{eq:bigone}
\frac{d^{\ell}}{dp^{\ell}}\left[e^{px_{1}+\cdots+p^{N}x_{N}}\right]=e^{px_{1}+\cdots+p^{N}x_{N}}\sum_{\la\vdash \ell} \binom{\ell}{a_{1}, \ldots , a_{k}}\frac{1}{\beta_1!\ldots \beta_k!}\prod_{i=1}^{k}\left(\sum_{j=1}^{N}(j)_{a_{i}}p^{j-a_{i}}x_{j}\right).
\eeq
\end{cor}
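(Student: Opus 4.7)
The plan is to apply Faà di Bruno's Formula (Theorem \ref{thm:FdB}) directly, with $t = p$, outer function $g(t) = e^{t}$, and inner function $f(t) = f(p, \bx) = p x_1 + p^2 x_2 + \cdots + p^N x_N$, treating $\bx = (x_1, \ldots, x_N)$ as a fixed parameter vector so that partial differentiation in $p$ coincides with total differentiation. The structural identification $\frac{d^\ell}{dp^\ell}\phi = \frac{d^\ell}{dt^\ell}g(f(t))$ is immediate once we set things up this way.

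First, I would observe that since $g(t) = e^t$ is a fixed point of differentiation, $g^{(k)}(f(t)) = e^{f(t)} = e^{p x_1 + \cdots + p^N x_N}$ for every $k \geq 0$. This is independent of $k$, so it factors out of the sum over partitions $\lambda \vdash \ell$, producing the exponential prefactor on the right-hand side of \eqref{eq:bigone}.

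Second, I would establish the auxiliary identity
\[
\frac{\partial^{a} f}{\partial p^{a}} = \sum_{j=1}^{N} (j)_{a}\, p^{j-a}\, x_{j},
\]
quoted just before the statement of the corollary. This follows by linearity of differentiation applied termwise to $f(p,\bx) = \sum_{j=1}^N p^j x_j$, together with the elementary falling-factorial rule $\frac{d^a}{dp^a} p^j = (j)_{a}\, p^{j-a}$ and the convention $(j)_{a} = 0$ for $a > j$, which silently kills the terms where the derivative order exceeds the monomial degree. By the definition $f^{\lambda}(t) = \prod_{i=1}^{k} \frac{d^{a_i}f}{dp^{a_i}}(t)$, this immediately yields
\[
f^{\lambda}(p) = \prod_{i=1}^{k}\left(\sum_{j=1}^{N}(j)_{a_{i}}\, p^{j-a_{i}}\, x_{j}\right),
\]
exactly the product appearing inside the right-hand side of \eqref{eq:bigone}.

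Finally, I would substitute the two ingredients into the statement of Theorem \ref{thm:FdB}, preserving the combinatorial prefactor $\binom{\ell}{a_1,\ldots,a_k}\frac{1}{\beta_1!\cdots\beta_k!}$ unchanged, to obtain \eqref{eq:bigone}. There is no genuine obstacle: the content of the corollary is a specialization of Faà di Bruno combined with a one-line polynomial derivative. The only place requiring care is bookkeeping — recognizing that the length $k$ of $\lambda$, the parts $a_i$, and the multiplicities $\beta_j$ all vary with $\lambda$, so that the sum $\sum_{\lambda \vdash \ell}$ implicitly ranges over all such $k$ — and verifying that the $(j)_{a}$ convention correctly truncates the $j$-sum at $j \leq N$ without extra casework.
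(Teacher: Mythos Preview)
Your proposal is correct and mirrors the paper's own argument essentially line for line: the paper sets $t=p$, $g(t)=e^t$, $f(t)=tx_1+\cdots+t^Nx_N$, computes $\frac{\partial^a f}{\partial p^a}=\sum_{j=1}^{N}(j)_{a}p^{j-a}x_j$ and hence $f^{\lambda}$, and then substitutes into Fa\`a di Bruno's formula to obtain \eqref{eq:bigone}. There is no methodological difference to report.
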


\section{Generalized KP solitons}

Classical integrable nonlinear PDEs, and more specifically the KP Equation, possess solitary wave solutions.  A natural object of study in the setting of \eqref{eq:KPu} is the family of $N$-solitons or line-solitons.  These are solitary wave solutions which localize into crests or troughs along a finite number of directions and decay exponentially elsewhere.   It is well-known that restricting the generators $\phi_j$ in \eqref{eq:KPtau} to exponentials in $\bx =(x_1, \ldots , x_N)$ will produce solutions of this type.  We shall refer to the associated functions $\tau$  (and consequently $u$ in \eqref{eq:KPu}) as soliton solutions of the equation.  For a complete discussion, see \cite{CK-2, KW-3, Hirota, MJD}, for instance.

Now that we have established formulas for the $p$-derivatives of the exponentials $\phi_j(p,\bx)$, we wish to use them as "generating seeds" in the construction of Wronskian solutions to \eqref{eq:KPtau}.  Although it is possible at this point to substitute these derivatives directly into \eqref{eq:tauWronsk}, we are instead following the approach of \cite{KW-1}, generating a larger family of solutions for combinatorial study.  In particular, we use linear combinations of the $p$-derivatives appearing in Corollary \ref{cor:bigone} as generators to construct the Wronskian.  For ease of notation, set
\beq\label{eq:Fq}
F_q= \frac{\p^{\ell_q}}{\p p_q^{\ell_q}}\phi(p_q, \bx)= e^{p_q x_{1}+\cdots+p_{q}^{N}x_{N}}\sum_{\la\vdash [\ell_{q}]} \binom{\ell_{q}}{a_{1}, \ldots , a_{k}}\frac{1}{\beta_1!\ldots \beta_k!}\prod_{i=1}^{k}\left(\sum_{j=1}^{N}(j)_{a_{i}}p_{q}^{j-a_{i}}x_{j}\right).
\eeq
Note that we are free to choose distinct values of $p$ and $\ell$ for each seed function (each $p$-derivative), which we have indicated by the indexing $p_q$ and $\ell_q$, respectively.  In Lemma \ref{lem:explem}, we showed that such $F_q$ can be used to form Wronskian solutions to \eqref{eq:KPtau}.  By \eqref{eq:hir}, their linear combinations clearly do so as well.  We state this precisely below and include a brief proof, for completeness.
\begin{thm}\label{thm:tauA}
Fix $s, n \in\mathbb{N}$ and let $a_{iq}$ be arbitrary real numbers, $i=1, \ldots , s$, $q=1, \ldots , n$.   Define ${\displaystyle g_i=\sum_{q=1}^{n} a_{iq}F_q}$.  Then the Wronskian $\tau=Wr(g_1, \ldots g_s)$ is  a solution to the KP Equation \eqref{eq:KPtau}.
\end{thm}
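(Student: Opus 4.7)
The plan is to reduce everything to the content of Lemma \ref{lem:explem} together with the linearity of the auxiliary system \eqref{eq:hir}, and then invoke the standard Wronskian--integrability principle for KP already alluded to in Remark \ref{rem:schur}. Concretely, I want to argue in three steps: first that each seed $F_q$ solves \eqref{eq:hir}; second that therefore every linear combination $g_i$ does as well; and third that a Wronskian built from any family of functions satisfying \eqref{eq:hir} automatically solves the bilinear KP equation \eqref{eq:KPtau}.

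For the first step, I would observe that by Lemma \ref{lem:explem}, $\phi(p_q,\bx) = e^{f(p_q,\bx)}$ satisfies $\partial_{x_m}\phi = \partial_{x_1}^m \phi$ for every $m = 1,\dots,N$ and every value of $p_q$. Since partial derivatives with respect to $p_q$ commute with partial derivatives with respect to each $x_m$, applying $\partial_{p_q}^{\ell_q}$ to both sides yields
\[
\frac{\partial}{\partial x_m}\Bigl(\partial_{p_q}^{\ell_q}\phi\Bigr) = \frac{\partial^m}{\partial x_1^m}\Bigl(\partial_{p_q}^{\ell_q}\phi\Bigr),
\]
which is exactly the statement that $F_q$ satisfies \eqref{eq:hir}. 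This is essentially the second assertion of Lemma \ref{lem:explem}, restated so that the seeds allowed in the Wronskian indexed by distinct $(p_q,\ell_q)$ are all admissible.

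For the second step, since \eqref{eq:hir} is a linear PDE system with constant (in fact, zero) coefficients, any $\mathbb{R}$-linear combination of its solutions is again a solution. Hence each $g_i = \sum_{q=1}^n a_{iq} F_q$ satisfies $\partial_{x_m} g_i = \partial_{x_1}^m g_i$ for $m=1,\dots,N$. This step is entirely routine.

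The third step is the heart of the matter, and where I expect the only real subtlety to lie. I would invoke the classical Wronskian--integrability result: if $g_1,\dots,g_s$ are smooth functions of $\bx$ satisfying $\partial_{x_2} g_i = \partial_{x_1}^2 g_i$ and $\partial_{x_3} g_i = \partial_{x_1}^3 g_i$ (i.e.\ \eqref{eq:hir} restricted to $m=2,3$ with the identifications $x_1=x$, $x_2=y$, $x_3=t$), then $\tau = Wr(g_1,\dots,g_s)$ satisfies the bilinear KP equation \eqref{eq:KPtau}. The proof amounts to differentiating the determinant row-by-row, expressing each resulting determinant in the bracket notation $[I_0,\dots,I_{s-1}]$ of Remark \ref{rem:schur}, and observing that the combination appearing in $(D_x^4 - 4 D_x D_t + 3 D_y^2)(\tau,\tau)$ reduces to a Pl\"ucker relation on $Gr(s,\infty)$ for the multi-indices $\alpha=\{0,1,\dots,s-2\}$ and $\beta = \{0,1,\dots,s-3,s-1,s,s+1\}$. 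Since the detailed calculation is standard (see \cite{Hirota, Satsuma, FN-1, MJD}), the "brief proof" can simply cite this identification and declare the theorem proved. The only place a reader might want extra care is the verification that the linear combinations $g_i$ feed into the same Wronskian machinery as the pure exponentials $\phi_j$; but because that machinery requires nothing beyond \eqref{eq:hir}, step two supplies exactly what is needed.
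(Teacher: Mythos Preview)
Your proposal is correct and follows essentially the same approach as the paper: the paper's proof simply verifies by a one-line linearity computation that each $g_i$ inherits \eqref{eq:hir} from the $F_q$, taking for granted (as you do in your third step) the standard Wronskian--integrability fact that any family satisfying \eqref{eq:hir} yields a $\tau$ solving \eqref{eq:KPtau}. Your write-up is somewhat more explicit about steps one and three than the paper's, but the argument is the same.
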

\begin{proof}
It is enough to demonstrate that each $g_i$ satisfies the conditions of (\ref{eq:hir}).  Observe that
$$
\frac{\partial g_i}{\partial x_m}=\frac{\partial}{\partial x_m} \left[\sum_{q=1}^{n}a_{iq}F_q\right] = 
\sum_{q=1}^{n}a_{iq}\frac{\partial F_q}{\partial x_m} =  \sum_{q=1}^{n}a_{iq}\frac{\partial^{m}F_{q}}{\partial x_1^{m}} = \frac{\partial^{m}}{\partial x_1^m}\sum_{q=1}^{n}a_{iq}F_q = \frac{\partial^m g_i}{\partial x_1^m},
$$
since each $F_q$ satisfies (\ref{eq:hir}); thus, the same condition holds for $g_i$.
\end{proof}
With these expressions in hand, we can then study their accompanying combinatorial structures.
\begin{rem}
This result may also be proved by viewing the Wronskian in terms of Schur polynomials.  For further discussion from this perspective, see \cite{Sato, MJD}.  As our goal is to study specific combinatorial properties of this particular class of solutions $Wr(g_1, \ldots , g_s)$, we do not need the full generality of that approach.  Instead, we follow the example of \cite{KW-1, KW-2} by considering the $g_i$ as products of matrices and studying the Grassmann structure therein.
\end{rem}

Let $A$ denote the $s \times n$ matrix  having entries $a_{iq}$, $i = 1, \ldots , s$, $q =1, \ldots, n$.  Then 
\beq\label{eq:notation1}
(g_1 \, g_2 \, \cdots \, g_s)^{T}=A\cdot(F_1\, \cdots \, F_n)^{T}.
\eeq
Observe that the vector $(g_1, \ldots , g_s)^T$ is constructed from the set of generators $(F_1, \ldots , F_n)^T$ by way of the full rank matrix $A$ in the Grassmannian $Gr(s,n)$, consisting of $s$-dimensional subspaces of $\mathbb{R}^n$.  Consequently, each solution is described by $A \in Gr(s,n)$ and the collection of constants $p_q$ appearing in $F_1, \ldots , F_n$.

To explore the Wronskian $Wr(g_1, g_2, \ldots , g_s)$ in terms of this matrix $A$, we invoke the Binet-Cauchy identity, which gives the determinant of the product of two rectangular matrices in terms of determinants of their minors:
\beq\label{eq:BinetCauchy}
\det(M'M'')=\sum_{I \in \binom{[t]}{s}}\det(M'_{[s],I})\det(M''_{I,[s]}).
\eeq
Here, $M^\prime$ is an $s \times t$ matrix, $M^{\prime\prime}$ is a $t \times s$ matrix, $[t]=\{1, 2, \ldots , t\}$, $I \in \binom{[t]}{s}$ denotes a subset of $[t]$ of size $s$, and $M_{S,T}$ denotes the minor of a matrix $M$ consisting of rows and columns indexed by the sets $S$ and $T$, respectively.  Thus, in the identity above, $S\subseteq [s]$ and $T \subseteq [t]$.

\begin{eg}
Given a matrix
$$
M=\begin{pmatrix}
1 &2&3 \\ 4&5&6\\7&8&9&\\10&11&12\end{pmatrix},
$$
and sets $S=\{2,4\}$, $T=\{1,3\}$, $M_{S,T}$ is
$$
M_{S,T}=\begin{pmatrix} 4&6\\10&12\end{pmatrix}.
$$
\end{eg}
For the general case being considered in Theorem \ref{thm:tauA}, $M'$ is the $s \times n$ matrix $A = (a_{iq})$ and $M''$ consists of the first $s$ rows of the matrix considered in the Wronskian $Wr(F_1, \ldots F_n)$.  Note that this product is indeed the matrix used for $Wr(g_1, \ldots g_s)$.   Thus, we write 
\beq\label{eq:notation2}
\tau_A=\sum_{I \in \binom{[n]}{s}} \det(A_{[s], I}) Wr(F_{i})_{i \in I}.
\eeq
\begin{rem}\label{rem:KWprod}
In \cite{KW-1, KW-2} Wronskian line-soliton solutions of the KP Equation are generated using seed functions $\{E_{i}\}$ given by ${E_{i}=\exp{(\sum_{j=1}^{m}\kappa_{i}^{j}t_{j})}}$.  It is indicated in \cite{KW-1} that when the $t_{j}$ are treated as constants ($j>3$), the function 
 $$
 \tau_A=Wr(f_1, \ldots , f_k)=\sum_{I \in \binom{[n]}{k}} \Delta_{I}(A)E_{I}
 $$
 is a soliton solution of the KP equation and consequently, only functions of the form $e^{cx+c^2y+c^3t}$ are considered.  Following the notation in \eqref{eq:notation1}, \eqref{eq:BinetCauchy} and \eqref{eq:notation2}, the index $k$ above is our $s$, $\Delta_{I}(A) = \det(A_{[s],I})$ is the determinant of the $k\times k$ submatrix of $A$ given by the columns indexed by $I$ and $E_{I}$ is $Wr((E_{j})_{j \in I})$. 

A combinatorial reason for the restriction in \cite{KW-1} is that for the restricted exponential functions $E_i(x,y,t)$, the derivatives used in $Wr(E_i)_{i\in I}$ give a Vandermonde matrix, whose determinant has the formulation $\prod (\kappa_i-\kappa_j)$.  It is this product formula which allows for an convenient approximation of $\tau_A$ in terms of matrix minors in \eqref{eq:notation2}.  Furthermore, the matrix $A$, viewed as a Grassmannian, gives insight into the structure of $\tau_{A}$ in the sense that local approximations of $\tau_A$ and locations of maxima are ``read off" from $A$'s combinatorial properties.  In \cite{KW-2} this approach is expanded somewhat by allowing for slightly more general exponentials.
\end{rem}

We remark that setting $\ell_j=0$ and relabeling $p_i=\kappa_i$ in \eqref{eq:Fq} recovers all the $\tau$-functions appearing in \cite{KW-1}.  Our treatment generates a broader family of solutions by allowing $\ell_j \neq 0$ in formula \eqref{eq:notation2}, as indicated in Examples section; see Example \ref{eg:example7}.  Consequently, we propose using the formulation \eqref{eq:notation2} to understand more complex $\tau_A$ by an analogous approach to \cite{KW-1, KW-2}.  It would be of particular interest to the authors if one could describe the transformation space or identify an appropriate subspace of a Grassmannian for this family of $N$-solitons to remain invariant in the generalized setting described in this communication.  This seems feasible, as the structure continues to depend on a $s\times n$ matrix A.

\subsection{Combinatorial formulation of $p$-derivatives, revisited}

The first step towards understanding how the combinatorial structure of $\tau_A$ in \eqref{eq:notation2} is informed by and influences soliton dynamics, is accomplished by studying the effects of different choices for $A$, $N$, $\{p_q\}$, and $\{\ell_{q}\}$ on the solution.  The behavior of the $\tau$ function is determined by both the $F_q$ terms and the matrix A.  We shall see that by devising a general recursion formula for the $p$-derivatives of the generating seeds, a product or Vandermonde formula similar to that appearing Remark \ref{rem:KWprod} can be recovered (as in \cite{KW-1, KW-2}).  We motivate such a recursion with a simple example.

\begin{eg}\label{eg:example5}
If $A=\begin{pmatrix} 1&3&5\\2&4&6 \end{pmatrix}$ and we are given $F_1, F_2, F_3$, then
\begin{eqnarray}
  \tau_{A} &=&  \sum_{I \in \binom{[3]}{2}}det(A_{[2],I})Wr(F_i)_{i \in I}  = \sum_{I \in \binom{[3]}{2}}\Delta_{I}(A)F_{I} \nonumber \\ 
  &=& \left| \begin{array}{cc} 1 & 3 \\ 2& 4 \end{array}\right| (F_1F_2'-F_1'F_2)+\left| \begin{array}{cc} 1 & 5 \\ 2& 6 \end{array}\right| (F_1F_3'-F_1'F_3)+\left| \begin{array}{cc} 3 & 5 \\ 4& 6 \end{array}\right| (F_2F_3'-F_2'F_3) \label{eq:eg5tau}
\end{eqnarray}
\end{eg}
Notice $F_q$ and its $x_1$-derivatives appear in the above expression.  To understand a function such as the one in the example, it would be helpful to have a formula for the derivatives with respect to $x_1$.  We now generate such a formula, as well as a compact recursion relation; we include a detailed proof in order to demonstrate the use of combinatorial properties of partitions in such arguments.

\begin{lem}\label{lem:lemma2}
Let $F^{(\ell)}$ be the $\ell^{th}$ derivative with respect to $p$ of $e^{f(p, \bf{x})}=e^{px_1+\cdots +p^N x_{N}}$, as in \eqref{eq:Fq}:  
$$
F^{(\ell)}=e^{p x_{1}+\cdots+p^{N}x_{N}}\sum_{\la\vdash \ell} \binom{\ell}{a_{1}, \ldots , a_{k}}\prod_{i=1}^{k}\left[\left(\sum_{j=1}^{N}(j)_{a_{i}}p^{j-a_{i}}x_{j}\right)/\beta_{i}!\right].
$$
 Then the derivative with respect of $x_1$ of $F^{(\ell)}$ is given by 
 $$\frac{\partial}{\partial x_1}\left[F^ {(\ell)}\right]=pF^{(\ell)}+e^{px_1+\cdots+p^N x_N}\sum_{\la' \vdash\ell-1 }\binom{\ell-1}{a_1', \ldots , a_r'}\prod_{i=1}^{k'}\left[ \left(\sum_{j=1}^{N}(j)_{a_{i}'}p^{j-a_i '}x_j\right)/\beta_i ' !\right].$$  In simpler notation, $$\frac{\partial}{\partial x_1}\left[F^ {(\ell)}\right]=pF^{(\ell)}+\ell F^ {(\ell -1)}.$$
\end{lem}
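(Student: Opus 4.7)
The plan is to exploit the fact that the partial derivatives $\partial_{x_1}$ and $\partial_p$ commute, combined with the elementary observation that $\partial_{x_1} f = p$, so $\partial_{x_1} e^{f(p,\bx)} = p\, e^{f(p,\bx)}$. Then
\[
\frac{\partial}{\partial x_1} F^{(\ell)} \;=\; \frac{\partial}{\partial x_1}\frac{\partial^{\ell}}{\partial p^{\ell}}\, e^f \;=\; \frac{\partial^{\ell}}{\partial p^{\ell}}\bigl(p\, e^f\bigr),
\]
and the Leibniz rule gives $\partial_p^{\ell}(p\cdot e^f) = \sum_{k=0}^{\ell}\binom{\ell}{k}\bigl(\partial_p^{k}p\bigr)\bigl(\partial_p^{\ell-k}e^f\bigr)$. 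Since $p$ is linear in $p$, only $k = 0$ and $k = 1$ contribute, yielding $pF^{(\ell)} + \ell F^{(\ell-1)}$ on the nose. This is the three-line route to the compact recursion.

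Because the larger goal of the paper is to build a combinatorial dictionary, I would actually present the argument in the partition language used by Corollary \ref{cor:bigone}. Write $F^{(\ell)} = e^f\sum_{\la\vdash \ell} c_{\la}\prod_{i=1}^{k} P_{a_i}$, where $P_a = \sum_{j=1}^{N}(j)_{a}p^{j-a}x_j$ and $c_{\la} = \binom{\ell}{a_1,\ldots,a_k}/(\beta_1!\cdots\beta_k!)$. When $\partial_{x_1}$ hits the exponential prefactor it produces the factor $p$ and accounts for the $pF^{(\ell)}$ term. When it hits the inner product, the key identity is $\partial_{x_1}P_{a} = (1)_{a}\,p^{1-a}$, which vanishes for $a \geq 2$ and equals $1$ when $a = 1$. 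Since every part of a partition $\la\vdash\ell$ with $\ell\geq 1$ satisfies $a_i\geq 1$, the product rule collapses $\partial_{x_1}\prod_i P_{a_i}$ to $\beta_1\prod_{j\in\la'} P_{a_j}$, where $\la'$ is the partition of $\ell-1$ obtained from $\la$ by deleting one copy of the part $1$.

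The combinatorial heart of the proof is then the bijection $\la \longleftrightarrow \la'\cup\{1\}$ between partitions of $\ell-1$ and partitions of $\ell$ containing at least one $1$. Under this bijection, the multinomial coefficient satisfies $\binom{\ell}{1,a_1',\ldots,a_{k'}'} = \ell\binom{\ell-1}{a_1',\ldots,a_{k'}'}$, while the multiplicity factor telescopes as $\beta_1/\beta_1! = 1/(\beta_1 - 1)! = 1/\beta_1'!$. Substituting these back and comparing with the Corollary \ref{cor:bigone} expression for $F^{(\ell-1)}$ both recovers the stated expansion in partitions of $\ell-1$ and pulls out the coefficient $\ell$, giving the compact form.

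The main obstacle is purely bookkeeping: verifying that the multinomial coefficient and the $\beta_i!$ factors recombine correctly under the reindexing $\la\mapsto\la'$, and in particular that no partition of $\ell-1$ is double-counted. Nothing deeper is hidden, since the lemma is essentially the partition-theoretic incarnation of Leibniz's rule applied to $\partial_p^{\ell}(pe^f)$. This Pascal-type identity is precisely what will drive the Vandermonde-style product formulas needed in the subsequent Grassmannian analysis.
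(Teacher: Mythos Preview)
Your proposal is correct, and in fact your second approach---carrying out the computation in the partition language, restricting to $\la$ with $a_1=1$, collapsing the $\beta_1$ factors via $\beta_1/\beta_1!=1/(\beta_1-1)!$, and invoking $\binom{\ell}{1,a_2,\ldots,a_k}=\ell\binom{\ell-1}{a_2,\ldots,a_k}$---is exactly the proof the paper gives. Your first approach, however, is genuinely different and more economical: the paper never exploits the commutativity of $\partial_{x_1}$ and $\partial_p$ to rewrite $\partial_{x_1}F^{(\ell)}=\partial_p^{\ell}(pe^f)$ and then finish with Leibniz. That shortcut bypasses all of the partition bookkeeping and makes the recursion transparent; the trade-off is that it hides precisely the combinatorial mechanism (the bijection $\la'\leftrightarrow\la'\cup\{1\}$ and the telescoping of the multinomial and multiplicity factors) that the paper wants to put on display, since those manipulations are the template for the later Vandermonde-type reductions. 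Either route is a complete proof; presenting both, as you do, is a nice touch.
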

\begin{proof}
The proof is an explicit computation, using the structure of partitions.  First, observe
  $$\frac{\partial}{\partial x_1}[F^ {(\ell)}]=pF^{(\ell)}+e^{px_1+\cdots+p^N x_N} \frac{\partial}{\partial x_1}\left[ \sum_{\la \vdash \ell}\binom{\ell}{a_1, \ldots , a_k}\prod_{i=1}^{k}\left[\left(\sum_{j=1}^{N}(j)_{a_i}p^{j-a_i}x_j\right)/\beta_i ! \right] \right].$$
 Notice that, by the definition of the falling factorial, $(1)_{a_i}=0$ for $a_i>1$.  Therefore most terms of $F^{(\ell)}$ do not depend on $x_1$.  In fact, the only terms containing $x_1$ are the initial factor of $e^{px_1+p^2x_2+\cdots +p^{N}x_{N}}$ and terms corresponding to partitions $\la$ whose least entry $a_1$ has size one.  In light of this fact, when computing the derivative we need only consider $\la$ such that $a_1=1$:
$$
\frac{\partial}{\partial x_1}[F^{(\ell)}]=pF^{(\ell)}+e^{px_1+\cdots+p^N x_N}\frac{\partial}{\partial x_1}\left[\sum_{\la\vdash \ell, a_1=1}\binom{\ell}{a_1, \ldots , a_k}\prod_{i=1}^{k}\left[\left(\sum_{j=1}^{N}(j)_{a_{i}}p^{j-a_i}x_j\right)/\beta_i ! \right] \right].
$$
We now analyze this formula in more detail, by focusing on 
$$
D=\frac{\partial}{\partial x_1}\left[\sum_{\la\vdash \ell, a_1=1}\binom{\ell}{a_1, \ldots , a_k}\prod_{i=1}^{k}\left[\left(\sum_{j=1}^{N}(j)_{a_{i}}p^{j-a_i}x_j\right)/\beta_i ! \right] \right].
$$
Splitting the product $\prod_{i=1}^{k}$ into two parts, $\{a_i \vert a_i=1\}$ and $\{a_i \vert a_i >1\}$, we rewrite $D$ as 
$$
D=\sum_{\la\vdash \ell, a_1=1}\binom{\ell}{a_1, \ldots , a_k}\frac{\partial}{\partial x_1}\left[\Bigg{\{}\prod_{i=1}^{\beta_1}\left(\sum_{j=1}^{N}(j)_1p^{j-1}x_j\right)^{\beta_1}/\beta_1 !\Bigg{\}}\Bigg{\{}\prod_{i=\beta_1+1}^{k}\left[ \left(\sum_{j=1}^{N}(j)_{a_{i}}p^{j-a_i}x_j \right)/\beta_i!\right] \Bigg{\}}\right].
$$
Since the product $\prod_{i=\beta_1+1}^{k}$ is independent of $x_1$, we need only compute the derivative of the product $\prod_{i=1}^{\beta_1}$.  We have
 \begin{eqnarray*}
 D&=&\sum_{\la\vdash \ell, a_1=1}\binom{\ell}{a_1, \ldots , a_k}\Bigg{\{}\prod_{i=\beta_1+1}^{k}\left(\sum_{j=2}^{N}p^{j-a_i}x_j\right)/\beta_i !\Bigg{\}}\frac{\partial}{\partial x_1}\left[\left(\sum_{j=1}^{N} jp^{j-1} x_j\right)^{\beta_1}/\beta_1 !\right]
  \\&=&\sum_{\la\vdash \ell, a_1=1}\binom{\ell}{a_1, \ldots , a_k}\Bigg{\{}\prod_{i=\beta_1+1}^{k}\left(\sum_{j=2}^{N}p^{j-a_i}x_j\right)/\beta_i !\Bigg{\}} (1/\beta_1 !)(\beta_1)\left(\sum_{j=1}^{N} jp^{j-1}x_j\right)^{\beta_1  -1}
   \\&=&\sum_{\la\vdash \ell, a_1=1}\binom{\ell}{a_1, \ldots , a_k}\Bigg{\{}\prod_{i=\beta_1+1}^{k}\left(\sum_{j=2}^{N}p^{j-a_i}x_j\right)/\beta_i !\Bigg{\}}\left(\sum_{j=1}^{N} jp^{j-1}x_j\right)^{\beta_1 -1}/(\beta_1 -1)!
 \end{eqnarray*}
A well-known combinatorial identity states $\binom{\ell}{1, a_2, \ldots , a_k}=\ell \binom{\ell-1}{a_2, \ldots , a_k}$ (this may be seen by viewing partitions of $\ell$ as partitions of $\ell-1$ and of $1$).  We may rewrite $D$ in terms of partitions $\la'=\{a_1 ', \ldots , a_r '\}$ of $\ell-1$:
$$
D=\ell \sum_{\la' \vdash(\ell-1) }\binom{\ell-1}{a_1', \ldots , a_r'}\prod_{i=1}^{r}\left(\sum_{j=1}^{N}(j)_{a_{i}'}p^{j-a_i '}x_j\right)/\beta_i ' !.
$$
Therefore we see that $$\frac{\partial}{\partial x_1}[F^{(\ell)}]=pF^{(\ell)}+e^{f(p, \bf{x})}D=pF^{(\ell)}+\ell F^{(\ell -1)}.$$
\end{proof}
In fact, this lemma may be generalized up to the $\ell$th derivative of $F^{(\ell)}$ with respect to $x_1$.

\begin{prop}\label{recurse}
For all $\ell$ and all $j\leq \ell$,
\beq
\frac{\partial^{j}}{\partial x_1^j}\left[F^{(\ell)}\right]=\sum_{i=0}^{j}\binom{j}{i}p^i \ell^{j-i}F^{(\ell-j+i)}.
\eeq
\end{prop}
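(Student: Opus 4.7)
The plan is to induct on $j$, with the base case $j=1$ supplied directly by Lemma 2. For the inductive step, assuming the formula holds at level $j-1$, I would differentiate the induction hypothesis term by term and apply Lemma 2 to each $F^{(\ell-j+1+i)}$, producing the two contributions $pF^{(\ell-j+1+i)}$ and $(\ell-j+1+i)F^{(\ell-j+i)}$ at each summand.

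After reindexing so that every term takes the common form $p^i F^{(\ell-j+i)}$ for $0 \leq i \leq j$, I would combine the two contributions at each interior index: the $p$-factor hitting the $(i-1)$-st term of the hypothesis and the index factor hitting the $i$-th term both land on $p^i F^{(\ell-j+i)}$. Gathering them together and invoking Pascal's identity $\binom{j-1}{i-1}+\binom{j-1}{i} = \binom{j}{i}$ produces the combinatorial coefficient of the target formula, while the boundary cases $i=0$ and $i=j$ each come from a single contribution.

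I expect the main obstacle to be verifying that the scalar coefficient propagates correctly through the induction. Specifically, one needs an identity $(\ell-j+1+i)\,c_{j-1,i} = c_{j,i}$, where $c_{j,k}$ denotes the coefficient of $\binom{j}{k}p^k F^{(\ell-j+k)}$. The ordinary power $\ell^{j-i}$ does \emph{not} satisfy this recursion, whereas the falling factorial $(\ell)_{j-i}=\ell(\ell-1)\cdots(\ell-j+i+1)$ does, via $(\ell-j+1+i)(\ell)_{j-1-i}=(\ell)_{j-i}$. Given the falling-factorial convention $(j)_a=j(j-1)\cdots(j-a+1)$ used throughout the paper, I would read $\ell^{j-i}$ in the statement as $(\ell)_{j-i}$; with this interpretation the induction closes. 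A conceptually cleaner alternative is to encode Lemma 2 as the operator identity $\partial_{x_1}=p+L$, where $L F^{(m)} := mF^{(m-1)}$. Since the parameter $p$ commutes with $L$, the binomial theorem immediately gives $\partial_{x_1}^j F^{(\ell)} = \sum_i \binom{j}{i}p^i L^{j-i} F^{(\ell)} = \sum_i \binom{j}{i}p^i (\ell)_{j-i} F^{(\ell-j+i)}$, which both explains the structure of the formula and gives a one-line derivation.
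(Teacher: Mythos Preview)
Your inductive strategy is exactly the paper's: induct on $j$, invoke Lemma~2 for the base case, differentiate the induction hypothesis termwise, and recombine via Pascal's identity. In that sense there is nothing new in the method.

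Where you diverge is precisely where you should. You are right that Lemma~2 applied to $F^{(\ell-j+i)}$ yields the coefficient $(\ell-j+i)$, not $\ell$, and hence the recursion forces the falling factorial $(\ell)_{j-i}$ rather than the ordinary power $\ell^{j-i}$. The paper's own proof silently replaces $(\ell-j+i)$ by $\ell$ in the line
\[
\sum_{i=0}^{j}\binom{j}{i}p^{i}\ell^{j-i}\bigl[pF^{(\ell-j+i)}+\ell\,F^{(\ell-j+i-1)}\bigr],
\]
which is a misapplication of Lemma~2; this compensating error is what makes the stated formula with ordinary powers appear to go through. Your check at $j=2$ already exposes it: one gets $\ell(\ell-1)F^{(\ell-2)}$, not $\ell^{2}F^{(\ell-2)}$. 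So your diagnosis and your operator reformulation $\partial_{x_1}=p+L$ with $LF^{(m)}=mF^{(m-1)}$ are both correct, and the clean statement should read
\[
\frac{\partial^{j}}{\partial x_1^{j}}F^{(\ell)}=\sum_{i=0}^{j}\binom{j}{i}p^{i}(\ell)_{j-i}\,F^{(\ell-j+i)}.
\]
One caveat: do not describe this as ``reading $\ell^{j-i}$ as $(\ell)_{j-i}$''---the paper's own manipulations (e.g.\ factoring out $\ell^{j+1}$) make clear that ordinary powers were intended, so this is a genuine correction, not a notational reinterpretation.
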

\begin{proof}
This is a simple proof by induction.  The case $j=1$ has already been verified above.  A calculation is enough to show the equation is true for $j+1$ if it is true for $j$, again using standard combinatorial identities:
\begin{eqnarray*}
\frac{\partial^{j+1}}{\partial x_1^{j+1}}\left[F^{(\ell)}\right]&=&\frac{\partial}{\partial x_1}\left[\sum_{i=0}^{j}\binom{j}{i}p^i \ell^{j-i}F^{(\ell-j+i)}\right]\\
&=&\sum_{i=0}^{j}\binom{j}{i}p^{i}\ell^{j-i}[pF^{(\ell-j+i)}+\ell F^{(\ell-j+i-1)}]\\
&=&\ell^{j+1}F^{(\ell-j-1)}+\sum_{i=0}^{j}\left[\binom{j}{i}+\binom{j}{i+1}\right]p^{i+1}\ell^{j-i}F^{(\ell-j+i)}\\
&=&\sum_{i=0}^{j+1}\binom{j+1}{i}p^{i}\ell^{j+1-i}F^{(\ell+i-(j+1))}.
\end{eqnarray*}
\end{proof}

\subsection{Examples}
In this section, we use \eqref{eq:notation2} to calculate solutions $\tau_A$ to the KP Equation explicitly, indicating features worthy of further investigation from this perspective.  The examples may also serve as case studies for any new combinatorial structures identified in future work.  Using the dictionary in Remark \ref{rem:KWprod}, such structures can also then be directly compared or contrasted with those appearing in \cite{CK-1, CK-2, KW-1, KW-2}.
\begin{eg}\label{eg:example6}
In the simplest setting, we recover solutions $\tau_A$ of the form appearing in \cite{KW-1, KW-2}.  Let $N=3$ and set $(x_1,x_2,x_3)=(x,y,t)$.  Fix $A$ to be the full-rank $2\times4$ matrix
$$
\left(\begin{array}{cccc}1&0&1&0 \\ 0&1&0&1\end{array}\right).
$$
Choosing $p_1=1$, $p_2=2$, $p_3=4$, $p_4=8$ and $\ell_1=\ell_2=\ell_3=\ell_4=0$,  equation \eqref{eq:Fq} gives us 
$$
F_1=e^{x+y+t}, \quad F_2=e^{2x+4y+8t}, \quad F_3=e^{4x+16y+64t}, \quad F_4=e^{8x+64y+512t}.
$$
Using \eqref{eq:notation2}, we obtain
\begin{eqnarray*}
\tau_A&=&\sum_{I \in \binom{[4]}{2}}det(A_{[2],I})Wr(F_i)_{i \in I}\\
&=&Wr(F_1,F_2)+Wr(F_3,F_4)+Wr(F_1, F_4)-Wr(F_2,F_3)\\
&=&e^{3x+5y+9t}+4e^{12x+80y+576t}+7e^{9x+65y+513t}-2e^{6x+20y+72t}.
\end{eqnarray*}
It is straightforward to verify $\tau_A$ satisfies \eqref{eq:KPtau}, or equivalently that $u_A$ satisfies \eqref{eq:KPu} using the logarithmic transformation \eqref{eq:logtransf}. 
The standard techniques of \cite{CK-1, KW-1} can be used to determine equations of the lines characterizing the soliton peaks, and will not be repeated here; a graph of the solution surface for fixed $t$ appears below.

\begin{figure}[h]
  \resizebox{12pc}{!}{\includegraphics{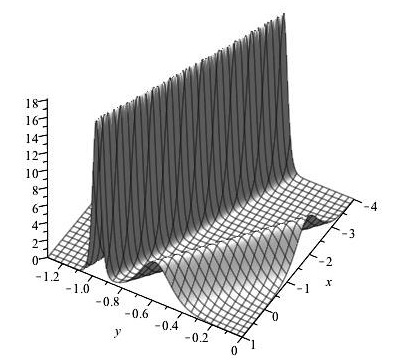}}
\caption{Contour plot of line-soliton solution $u_A(x,y,0.1)$ of KP Equation generated in Example \ref{eg:example6}}
\end{figure}
\end{eg}
As previously mentioned, the generalized formulation in \eqref{eq:Fq} allows for $\ell_j \neq 0$ (in contrast with \cite{KW-1}, for instance, where $\ell_j=0$ for all $j$).  The next example exhibits a second class of solutions whose combinatorial structure has not yet been studied from the perspective appearing in \cite{KW-1, KW-2}, using the totally non-negative Grassmannian, or plabic graph structure.
\begin{eg}\label{eg:example7}
Let $N=3$ and set $A$ to be the matrix in Example \ref{eg:example5}.  Choosing $\ell_1=1$, $\ell_2=1$, and $\ell_3=2$, produces
\begin{eqnarray*}
F_1&=&(x_1+2p_1x_2+3p_1^2x_3)e^{p_1x_1+p_1^2x_2+p_1^3x_3}\\
\frac{\partial F_1}{\partial x_1}&=&(1+p_1x_1+2p_1^2x_2+3p_1^3x_3)e^{p_1x_1+p_1^2x_2+p_1^3x_3}\\
F_2&=&(x_1+2p_2x_2+3p_2^2x_3)e^{p_2x_1+p_2^2x_2+p_2^3x_3}\\
\frac{\partial F_2}{\partial x_1}&=&(1+p_2x_1+2p_2^2x_2+3p_2^3x_3)e^{p_2x_1+p_2^2x_2+p_2^3x_3}\\
F_3&=&(2x_2+6p_3x_3+(x_1+2p_3x_2+3p_3^2x_3)^2)e^{p_3x_1+p_3^2x_2+p_3^3x_3}\\
\frac{\partial F_3}{\partial x_1}&=&(2x_1+6p_3x_2+12p_3^2x_3+p_3(x_1+2p_3x_2+3p_3^2x_3)^2)e^{p_3x_1+p_3^2x_2+p_3^3x_3}.\\
\end{eqnarray*}
These seed functions may be used to produce a generalized solution $\tau_A(x,y,t)$ via \eqref{eq:eg5tau}.  Recall that these generalized solutions are of interest because of their similarity to the functions used in \cite{KW-1, KW-2}.  However, the complexity increases dramatically for these functions.  As a simple indication, observe that for $p_1=1$, $p_2=2$, and $p_3=3$,
$$
\tau_A(x,y,0) = \alpha e^{3x+5y} + \beta e^{4x+10y} + \gamma e^{5x+13y},
$$
where $\alpha = (2x^2+12xy+16y^2-4y)$, $\beta = (112x^2y+480xy^2+32xy+8x^3-16y^2+576y^3-8y+4x^2)$ and $\gamma =  (168xy^2+32x^2y+20xy+288y^3+2x^3+40y^2-4y+2x^2)$.  Thus, a more delicate analysis may be required in order to characterize the dominant exponential regions, necessary for constructing the contour graphs referred to in the previous example.  This will be explored in a subsequent communication.
\end{eg}
Notice that the flexibility of our approach allows for arbitrary $\ell_j$.  They need not be equal nor must they be distinct.  Furthermore, one can still produce solutions to the 2+1-dimensional KP equation using generators $F_q(x_1, \ldots , x_N)$ in  $N$ variables, for $N>3$, by evaluating the resulting $\tau_A(x_1, \ldots , x_N)$ for fixed values of $x_4, \ldots , x_N$.  Such a perspective has not yet been taken into consideration in extending the results of \cite{KW-1, KW-2, CK-1}.

\begin{eg}\label{eg:specialcases}
We now impose a restriction on the generators $F_q$ in \eqref{eq:Fq}.  This reveals connections between $\tau_A$ and several well-known classes of matrices.  Recall that $\tau_A$ is given by a combination of the determinants of the minors of $A$ and Wronskians of the form 
\beq\label{eq:subWronsk}
Wr(F_{j_1}, \ldots , F_{j_k})=\begin{vmatrix} F_{j_1}& F_{j_2}& \cdots& F_{j_k} \\ \frac{\partial F_{j_1}}{\partial x_1}& \frac{\partial F_{j_2}}{\partial x_1}& \cdots& \frac{\partial F_{j_k}}{\partial x_1}\\ \vdots& \vdots& \ddots& \vdots&\\ \frac{\partial^{k-1} F_{j_1}}{\partial x_1^{k-1}}& \frac{\partial^{k-1} F_{j_2}}{\partial x_1^{k-1}}& \cdots& \frac{\partial^{k-1}F_{j_k}}{\partial x_1^{k-1}}&\end{vmatrix}.
\eeq
For simplicity, in this example, we shall suppress the $j$ subscript so that $F_{j_i}$ will be written as $F_i$.  Consider $F_1, \ldots, F_k$ such that $p_i=p$ for all $i$.  In the notation of Lemma \ref{lem:lemma2}, then, $F_i=F^{(\ell_i)}$.  Applying this lemma, we obtain
 \begin{eqnarray*}
 Wr(F_1, \ldots , F_k)&=&\begin{vmatrix} F^{(\ell_1)}& F^{(\ell_2)}& \cdots& F^{(\ell_k)} \\ \frac{\partial F^{(\ell_1)}}{\partial x_1}& \frac{\partial F^{(\ell_2)}}{\partial x_1}& \cdots& \frac{\partial F^{(\ell_k)}}{\partial x_1}\\ \vdots& \vdots& \ddots& \vdots&\\ \frac{\partial^{k-1} F^{(\ell_1)}}{\partial x_1^{k-1}}& \frac{\partial^{k-1} F^{(\ell_2)}}{\partial x_1^{k-1}}& \cdots& \frac{\partial^{k-1}F^{(\ell_k)}}{\partial x_1^{k-1}}&\end{vmatrix}\\
 &=&
 \left|\begin{array}{lll}
 F^{(\ell_1)}& \cdots& F^{(\ell_k)} \\
  pF^{(\ell_1)}+\ell_1F^{(\ell_1-1)}& \cdots& pF^{(\ell_k)}+\ell_kF^{(\ell_k-1)}\\
   \vdots& \ddots& \vdots\\ 
   \displaystyle\sum_{i=0}^{k-1}\binom{k-1}{i}p^i\ell_1^{k-1-i}F^{(\ell_1-k+1+i)}& \cdots & \displaystyle\sum_{i=0}^{k-1}\binom{k-1}{i}p^i\ell_k^{k-1-i}F^{(\ell_k-k+1+i)}\end{array}\right|.
 \end{eqnarray*} 
 In this form, it is clear that elementary row operations can simplify the determinant.  In fact, 
 \beq\label{eq:newWronsk}
 Wr(F_1, \ldots, F_k)=\left|\begin{array}{cccc} F^{(\ell_1)}& F^{(\ell_2)}& \cdots& F^{(\ell_k)} \\ \ell_1F^{(\ell_1-1)}& \ell_2F^{(\ell_2-1)}& \cdots& \ell_kF^{(\ell_k-1)}\\ \vdots& \vdots& \ddots& \vdots\\ \ell_1^{k-1}F^{(\ell_1-k+1)}& \ell_2^{k-1}F^{(\ell_2-k+1)}& \cdots & \ell_k^{k-1}F^{(\ell_k-k+1)}\end{array}\right|.
 \eeq
Observe the duality between this matrix and \eqref{eq:subWronsk}.  The columns of \eqref{eq:newWronsk}  are decreasing $p$-derivatives, whereas in \eqref{eq:subWronsk}, they are increasing $x_1$-derivatives.

The matrix appearing in \eqref{eq:newWronsk} has connections to several well-studied classes of matrices.  It is an example of an alternant matrix.  Furthermore, it can be expressed as a Hadamard or Schur product of the Vandermonde matrix $V=(\ell_{i}^{j-1})_{1\leq i, j \leq k}$ and the matrix of $p$-derivatives $W=(F^{(\ell_i-j+1)})$.  We notice that the Wronskian $E_I$ appearing in \cite{KW-1} can also be written in this manner; in their work, $E_I$ is the Hadamard product of a Vandermonde matrix with entries of the form $\kappa_i^j$, and the matrix with constant columns $(E_i, \ldots, E_i)^{T}$.  For definitions and properties of alternants, Hadamard and Vandermonde matrices, the reader may wish to consult \cite{GenRef1, GenRef2}.
\end{eg}

The structure of the special case in Example \ref{eg:specialcases} indicates the possibility of further relationships between the generalized KP solitons in \eqref{eq:notation2} of this paper and the family of line-solitons of the KP Equation studied in \cite{KW-1, KW-2, CK-1, CK-3}.  Line-solitons admit combinatorial analysis via Grassmannians due to the presence of a Vandermonde determinant (see Remark \ref{rem:KWprod});  the aforementioned duality and matrix factorizations into Vandermonde matrices serve as preliminary evidence for a similar analysis in this setting.  The next step in the analysis will be to find an expression of the determinant that corresponds to some simplifying combinatorial structure in this generating family of functions (just as the product $\prod \kappa_i - \kappa_j$ appears in \cite{KW-1}). 

\section{Summary}
We have summarized several known facts concerning a family of Wronskian solutions to the KP-II equation using a variety of combinatorial methods.  The results are formulated keeping in mind the recent approaches taken by Kodama et. al. in \cite{CK-1, CK-2,CK-3,  KW-1, KW-2, KW-3, KodamaPierce} for this equation.  Generalized KP soliton solutions are expressed in closed form using a family of generators different than those in the recent references in order to suggest possible extensions to their body of work.  As the discussion and Examples indicate, we have some preliminary indication that this family of functions contains enough structure to allow this.

The techniques devised both in this report and in the recent references are applicable to a large class of Wronskian-integrable PDE, for which soliton solutions are known to exist (KdV, Boussinesq, Jimbo-Miwa, etc.).  We suggest that the broad program under consideration involve the study of bilinearizable PDE admitting determinantal solutions; such  equations have been found to possess rich combinatorial structure.  Other concrete connections between integrable PDE and natural combinatorial objects have been successfully studied, e.g.    generating functions in \cite{GouldenJackson}, tests for bilinearizability by way of a modified Bell polynomial \cite{GLNW}.  The current approach sets the stage for further examining such connections.

%
%


\vspace{1cm}

{\bf Acknowledgments. } SB gratefully acknowledges support from Professor M. Kiessling and the NSF through grant DMS-0807705; SB also thanks Professor J. Goldin for several helpful conversations.  AR is supported by the NSF under grant DMS-1004382.  Both authors thank the Department of Mathematics at Rutgers University for providing a stimulating environment while working on this project.

\bibliographystyle{plain}

\end{document}